\let\csname equation*\endcsname\relax 
\let\csname endequation*\endcsname\relax 
\newtheorem*{theorem}{Theorem}
\newtheorem*{lemma}{Lemma}
\newtheorem*{corollary}{Corollary}
\begin{document}

\date{\today}
\title{General conditions for maximal violation of non-contextuality in discrete and continuous variables}
\author{A.~Laversanne-Finot$^1$, A.~Ketterer$^1$, M. R. Barros$^2$, S. P. Walborn$^2$, T.~Coudreau$^1$, A.~Keller$^{1, 3}$, P.~Milman$^1$}
\eads{\mailto{adrien.laversanne-finot@univ-paris-diderot.fr},
\mailto{perola.milman@univ-paris-diderot.fr}}
\address{$^1$ Univ Paris Diderot, Sorbonne Paris Cit\'e, MPQ, UMR 7162 CNRS, F-75205 Paris, France}
\address{$^2$ Instituto de F\'{\i}sica, Universidade Federal do Rio de
Janeiro, Caixa Postal 68528, Rio de Janeiro, RJ 21941-972, Brazil}
\address{$^3$ Universit\'e Paris-Sud, Universit\'e Paris-Saclay 91405 Orsay, France}

\begin{abstract}
The contextuality of quantum mechanics can be shown by the violation of inequalities based on measurements of well chosen observables. An important property of such observables is that their expectation value can be expressed in terms of probabilities for obtaining two exclusive outcomes. Examples of such inequalities have been constructed using either observables with a dichotomic spectrum or using periodic functions obtained from displacement operators in phase space. Here we identify  the general conditions on the spectral decomposition of observables demonstrating state independent contextuality of quantum mechanics. Our results not only unify existing strategies for maximal violation of state independent non-contextuality inequalities but also lead to new scenarii enabling such violations. Among the consequences of our results is the impossibility of having a state independent maximal violation of non-contextuality in the Peres-Mermin scenario with discrete observables of odd dimensions.
\end{abstract}
%


%
\section{Introduction}
%
The question of whether physical systems have intrinsic properties is a long standing debate~\cite{PhysRev.47.777} that was turned upside down with the advent of quantum physics. A core result by Bell states that there is no local hidden variable model reproducing the predictions of quantum mechanics~\cite{Bell:1964}. Another important result is the Kochen-Specker theorem \cite{Kochen:1967aa} that demonstrates the contextual nature of quantum mechanics.

In a non-contextual theory, the result of a measurement $\nu(A)$ depends only on the state of the system and the observable ${A}$ being measured. Additionally, measurement outcomes can depend on some (possibly hidden) variable $\lambda$ describing the state of the system. If one knows $\lambda$ then one can predict the outcome of any measurement: we say thus that measurement outcomes are pre-determined. This corresponds to the classical view in which every system is in a well defined state. In particular, in a non-contextual theory measurement outcomes do not depend on the compatible observables that are measured together with ${A}$. 

The initial argument by Kochen-Specker to show the contextuality of quantum mechanics used a set of 117 vectors in a 3-dimensional space \cite{Kochen:1967aa}. Since then, many attempts have been made to refine this argument and turn it into an experimentally testable property. The contextuality of quantum mechanics was proved for a particular state and a Hilbert space of dimension 4 by Peres \cite{Peres:PhysLettA:151}. Mermin showed that this argument could be recast to find a state independent proof of contextuality \cite{PhysRevLett.65.3373}. The same type of argument as Mermin's was used to derive state independent non-contextuality inequalities ($\mathit{i.e.}$, that can be violated by any state if non-contextuality does not hold)~\cite{PhysRevLett.101.210401, PhysRevLett.109.250402, PhysRevLett.108.030402}. Such inequalities, obtained in the so--called Peres Mermin scenario (PMS) are particularly attractive from the experimental point of view, and  have been experimentally tested with trapped ions~\cite{Kirchmair:2009aa}, nuclear spin ensembles~\cite{PhysRevLett.104.160501} and photons~\cite{PhysRevX.3.011012, PhysRevLett.108.200405, PhysRevLett.103.160405}. In addition, it has been proven that contextuality (in a state dependent formulation) is a critical resource for quantum computing~\cite{PhysRevA.88.022322, NJPProba, Howard:2014aa}. 

Even though the study of contextuality was originally focused on discrete variable system, such as qubits and qudits, it is also possible to find state independent non-contextuality inequalities for continuous variables in the PMS~\cite{PhysRevA.82.022114, Asadian:2015aa}. In this case, one notes that the operators used to derive the inequalities have a bounded spectrum. This last property ensures that their expectation values can be expressed as the ones of dichotomic observables defined in an extended space~\cite{PhysRevA.67.060101}. The bounded observables used in~\cite{PhysRevA.82.022114, Asadian:2015aa} can be obtained by measuring bounded functions of observables with an arbitrary spectrum, as considered in the protocols described in~\cite{Asadian:2015aa} and~\cite{PhysRevA.82.022114}. A similar technique led, in~\cite{PhysRevA.91.012106, PhysRevA.92.062107}, to the definition of dimension independent Bell-type inequalities~\cite{PhysRev.47.777, Bell:1964, PhysRevLett.23.880}, which is a particular case of non-contextuality inequalities where, in addition, locality is enforced. Ruling out local realism in experiments requires to satisfy more stringent constraints that are not necessary to prove the contextuality of quantum mechanics per se. The contextuality of quantum mechanics can be proven, in principle, by measuring well chosen observables, independently of the system's particular state~\cite{PhysRevLett.103.050401}. It is thus of interest to characterize which general properties observables must have for testing contextuality and to maximally violate experimentally testable non-contextuality inequalities.  

So far, the contextuality of quantum mechanics has been shown for specific observables defined by continuous or discrete variables. In addition, according to the considered case, the border between contextual and non-contextual theories varies.  It is natural to seek to identify the common features of the existing results and try to formalize the general conditions quantum observables must fulfill in order to demonstrate state independent contextuality irrespectively of their dimensionality. Such understanding would potentially enable the state independent test of this essential property of quantum mechanics in any quantum system. In other words, what are the common/distinctive properties and features of non-contextuality inequalities? How can one build a suitable inequality from arbitrary observables permitting the demonstration of state independent contextuality in quantum mechanics? 

In this article we answer these questions in the PMS approach, which is, as mentioned, a particularly experimentally attractive formulation of the Kochen-Specker theorem. The PMS is shown in terms of Table~\ref{square1}, and was originally constructed to illustrate the differences between contextual and non-contextual theories in the case of measurements of dichotomic observables that, in quantum mechanics, can be represented by the Pauli matrices. We show that the generalized version of the PMS can be obtained using complex functions (continuous or discrete) of modulus one instead of real ones. In quantum mechanics these functions are associated to unitary operations acting on a bipartite system. We will see that state independent maximal violation of non-contextuality inequalities is possible iff, for each one of the considered parties, the introduced unitary operators obey specific commutation relations. Theses relations reduce to the known particular cases according to the chosen set-up. A consequence of our results is that in the bi-partite PMS with discrete variables, it is only possible to observe state independent maximal violation of the non-contextual bound with qudits of even dimension. Furthermore, we derive necessary and sufficient conditions the spectrum of the observable must fulfill such that it can be used for state independent maximal violation of the non-contextual bound. Our results significantly expand the possibilities of experimental state independent maximal violation of non contextual inequalities in arbitrary dimensions  and provide a clear and unified framework for demonstrating the contextuality of quantum mechanics.

This paper is organized as follows: in section~\ref{sec:PMS} we start by recalling the principles of the PMS before generalizing it in~\ref{sec:PMSarbitrary}. In this section we prove the main result of this paper, characterizing the spectrum of observables that can be
used to perform a contextuality test. We then discuss some examples in section~\ref{sec:Examples} and show how our result apply to previously known observables and can be used to find new observables suitable for contextuality tests. Finally, we conclude in~\ref{sec:Conclusion}.
\section{The Peres-Mermin square\label{sec:PMS}}
Let us consider a set of nine dichotomic observables $\{A_{jk}\}$, $i,j=1,2,3$, such that the observables sharing a common subscript are mutually commuting. From this set of observables, one can construct the quantity 
\begin{align}\label{contextualInequality}
\braket{X} = & \braket{A_{11}A_{12}A_{13}}+\braket{A_{21}A_{22}A_{23}}+\braket{A_{31}A_{32}A_{33}} \\
&+\braket{A_{11}A_{21}A_{31}}+\braket{A_{12}A_{22}A_{32}}-\braket{A_{13}A_{23}A_{33}}. \nonumber
\end{align}
In a non-contextual theory, observables are described by pre-determined values $-1$ or $1$. One can show, by testing every possible combination of outcomes for the $\{A_{jk}\}$, that the maximum value of $\braket{X}$ is 4 in  a non-contextual theory~\cite{PhysRevLett.101.210401}.

\begin{table}[!h]
\caption{\label{square1}The Peres-Mermin square for Pauli operators,  $\hat{\sigma}_{x, y, z}$.}
\begin{center}
\begin{tabular}{cccc}
  $A_{jk}$ & $k=1$ & $k=2$ & $k=3$ \\ 
  \hline
  $j=1$ & $\hat{\sigma}_x \otimes\mathbb{1}$ & $\mathbb{1}\otimes \hat{\sigma}_x$ & $\hat{\sigma}_x\otimes \hat{\sigma}_x$\\
  $j=2$ & $\mathbb{1}\otimes \hat{\sigma}_z$ & $\hat{\sigma}_z\otimes\mathbb{1}$ & $\hat{\sigma}_z\otimes \hat{\sigma}_z$ \\
  $j=3$ & $\hat{\sigma}_x\otimes \hat{\sigma}_z$ & $\hat{\sigma}_z\otimes \hat{\sigma}_x$ & $\hat{\sigma}_y\otimes \hat{\sigma}_y$
\end{tabular}
\end{center}
\end{table}
We now consider the case where observables $\{A_{ij}\}$ are quantum and given by Table~\ref{square1}. One can easily check that the observables in the same row or column are compatible (commuting).  Nevertheless, because the product of operators along each row and column is $\mathbb{1}$, except for the last column where it is $-\mathbb{1}$, one has that for every quantum state  $\braket{X}_{QM} = 6$, violating the bound of 4 discussed above and thus proving that quantum mechanics is contextual in the case where dichotomic observables are measured. 

The question we address now is how to generalize the PMS in order to extend the tests of the contextuality of quantum mechanics to situations where  observables with fundamentally different properties from the Pauli matrices are measured. Our generalization will permit the identification of conditions that observables with an arbitrary spectrum must satisfy in order to prove contextuality.
\section{Peres-Mermin square with arbitrary unitary operators\label{sec:PMSarbitrary}}
%
%
Contextuality can also be tested using complex functions instead of real ones, as is the case of the previous example involving Pauli matrices. This leads to inequalities involving the (independent) real and imaginary parts of such functions. Of course, in order to test contextuality, experiments involving either the measurement of the real or the imaginary part of such functions must be carried out. Also, it is clear that since the real and imaginary parts are independent, contextuality can be independently tested for each one of these components. 

Enlightening results that will be used here as a guideline were obtained by Asadian {\it et al.}~\cite{Asadian:2015aa}, where the particular case of contextuality tests using displacements in phase space was studied. There, the authors obtain many interesting conditions and constraints for testing contextuality using displacement operators that can be well understood in the light of the general framework we obtain here. 

Non-contextuality inequalities involving complex functions can be derived by choosing  the $A_{jk}$'s appearing in the PMS to be complex functions $U_{jk}= A^R_{jk} + iA^I_{jk}$, with $|A^R_{jk}|^2+|A^I_{jk}|^2=1$. This leads to  Table~\ref{square2}
\begin{table}[!h]
\caption{\label{square2}The Peres-Mermin square for arbitrary operators, where $U_i$ are arbitrary unitary operators.}
\begin{center}
\begin{tabular}{cccc}
  $A_{jk}$ & $k=1$ & $k=2$ & $k=3$ \\ 
  \hline
  $j=1$ & $\hat{U}_1^{\dagger}\otimes\mathbb{1}$ & $\mathbb{1}\otimes \hat{U}_1^{\dagger}$ & $\hat{U}_1\otimes \hat{U}_1$\\
  $j=2$ & $\mathbb{1}\otimes \hat{U}_2^{\dagger}$ & $\hat{U}_2^{\dagger}\otimes\mathbb{1}$ & $\hat{U}_2\otimes \hat{U}_2$ \\
  $j=3$ & $\hat{U}_1\otimes \hat{U}_2$ & $\hat{U}_2\otimes \hat{U}_1$ & $\hat{U}_3\otimes \hat{U}_3$ \\
\end{tabular}
\end{center}
\end{table}
where, in quantum mechanics, functions  $U_{j}$ become unitary operators, $\hat U_{j}$. For the sake of clarity, we used here a similar reasoning and notation as the one in~\cite{Asadian:2015aa}, with the important difference that while Ref.~\cite{Asadian:2015aa} was  restricted to the specific case of displacement operators, here we consider that operators $\hat{U}_j$ can be arbitrary unitaries defined in a Hilbert space of arbitrary dimension. By doing so, we can identify Table~\ref{square1} as a particular case of Table~\ref{square2}. 

By  multiplying the PMS's rows and columns in  Table~\ref{square2} , we are left with an inequality involving complex functions. It can be transformed in an inequality for real functions by taking its real or imaginary parts. We will consider here its real part:
\begin{equation}\label{realpartcontextuality}
\braket{\mathrm{Re}(X)} = \braket{R_1} + \braket{R_2} + \braket{R_3} + \braket{C_1} + \braket{C_2} - \braket{C_3} ,
\end{equation}
where
\begin{align}\label{expansion1}
R_j = (A^R_{j1}A^R_{j2}-A^I_{j1}A^I_{j2})A^R_{j3}-(A^I_{j1}A^R_{j2}+A^R_{j1}A^I_{j2})A^I_{j3}, \\
C_k = (A^R_{1k}A^R_{2k}-A^I_{1k}A^I_{2k})A^R_{3k}-(A^I_{1k}A^R_{2k}+A^R_{1k}A^I_{2k})A^I_{3k}.\label{expansion2}
\end{align}
In~\cite{PhysRevA.82.022114} it was proven that for non-contextual theories where the condition $\braket{(A_{ij}^R)^2 + (A_{ij}^I)^2} \le 1$ is assumed, $\braket{\mathrm{Re}(X)} \leq 3\sqrt{3}$. In fact, it is not necessary to assume this condition. Indeed, it has been proving in~\cite{Asadian:2015aa} that one can also probe contextuality for continuous variables without any assumptions, albeit with a slightly modified inequality. The description made so far is general and encompass the PMS with Pauli operators. Indeed, if one adds the additional constraint that the real (or imaginary) part of the considered complex functions, $U_{j}$, is zero, together with the condition that the measured observables must have a dichotomic expectation value, one recovers Table~\ref{square1} and, from (\ref{realpartcontextuality}), the non-contextual bound of $4$.

We now move to the quantum description of the PMS using unitary operators. Since unitary operators are not, in general, observables, one can split them into their real and imaginary Hermitian parts, $\hat U_{jk} = \hat A^R_{jk} + i\hat A^I_{jk}$, which are observables. In the following lemma we show that these observables will maximally violate the non-contextuality inequality $\braket{\mathrm{Re}(\hat{X})} \le 3$ if the unitary operators obey specific commutation and anti-commutation relations.
\begin{lemma}[Maximal state-independent contextuality]
The operators $\hat{U}_1$, $\hat{U}_2$ and $\hat{U}_3$ will lead to a state independent maximal violation of the non-contextual bound if and only if they satisfy the following commutation and anti-commutation relations:
\begin{align}
 [\hat U_i, \hat U_j] &= \pm 2i\epsilon_{ijk}\hat U_k^{\dagger}, \label{commute2a}\\
 \{\hat U_i, \hat U_j\} &= 2\delta_{ij}\hat U_i^2, \label{commute2b}
\end{align}
where $\epsilon_{ijk}$ is the Levi-Civita symbol.
\end{lemma}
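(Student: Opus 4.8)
The plan is to reduce state-independent maximality to two operator identities involving only the products $V:=\hat U_1\hat U_2\hat U_3$ and $W:=\hat U_2\hat U_1\hat U_3$, to show that these identities are rigid enough to pin $V$ and $W$ down to scalar multiples of the identity, and then to read off the stated commutators; the converse is a short computation. To begin, I would note that for \emph{any} unitaries $\hat U_i$ the products of the entries along rows $1,2$ and columns $1,2$ of Table~\ref{square2} telescope to $\mathbb{1}\otimes\mathbb{1}$ using only $\hat U_i^\dagger\hat U_i=\mathbb{1}$, so $R_1=R_2=C_1=C_2=\mathbb{1}\otimes\mathbb{1}$ and only $R_3,C_3$ carry information. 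Requiring, as one must for the square to be well defined, that the third row and column have mutually commuting entries, formulas (\ref{expansion1})--(\ref{expansion2}) give $R_3=\mathrm{Re}(V\otimes W)$ and $C_3=\mathrm{Re}(V\otimes V)$. Since the real part of a unitary has operator norm $\le 1$, one has $\mathrm{Re}(\hat X)\le 6\,\mathbb{1}\otimes\mathbb{1}$, so attaining the value $6$ on every state (the maximal violation, cf. Section~\ref{sec:PMS}) is \emph{equivalent} to the two operator identities $\mathrm{Re}(V\otimes W)=\mathbb{1}\otimes\mathbb{1}$ and $\mathrm{Re}(V\otimes V)=-\mathbb{1}\otimes\mathbb{1}$.

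The step I expect to be the real content is the rigidity of these identities: since $\langle\psi|A|\psi\rangle$ lies in the closed unit disc for a unitary $A$, the equation $\mathrm{Re}(A)=\pm\mathbb{1}$ forces $\langle\psi|A|\psi\rangle=\pm1$ for every $\psi$, and hence $A=\pm\mathbb{1}$. Applied to $A=V\otimes W$ this gives $V\otimes W=\mathbb{1}\otimes\mathbb{1}$, which forces both tensor factors to be scalars, $V=c\,\mathbb{1}$ and $W=c^{-1}\mathbb{1}$ with $|c|=1$; applied to $A=V\otimes V$ it gives $V\otimes V=-\mathbb{1}\otimes\mathbb{1}$, hence $c^{2}=-1$, i.e. $V=\pm i\,\mathbb{1}$ and $W=-V=\mp i\,\mathbb{1}$. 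Once maximality has been localized onto these two products the rest is bookkeeping with unitaries: from $W=-V$, i.e. $\hat U_2\hat U_1\hat U_3=-\hat U_1\hat U_2\hat U_3$, cancelling $\hat U_3$ on the right yields $\{\hat U_1,\hat U_2\}=0$, and by unitarity this propagates to $\{\hat U_1^\dagger,\hat U_2\}=\{\hat U_1,\hat U_2^\dagger\}=0$; from $V=\pm i\,\mathbb{1}$ we get $\hat U_3=\pm i\,\hat U_2^\dagger\hat U_1^\dagger$, and substituting this and using the anticommutators gives $\{\hat U_1,\hat U_3\}=\{\hat U_2,\hat U_3\}=0$, which is (\ref{commute2b}), while $[\hat U_1,\hat U_2]=2\hat U_1\hat U_2=\pm 2i\,\hat U_3^\dagger$ together with its two cyclic images — all carrying a single global sign — is exactly (\ref{commute2a}).

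For the converse I would run this backwards. Assuming (\ref{commute2a})--(\ref{commute2b}), the off-diagonal anticommutators $\{\hat U_i,\hat U_j\}=0$ ($i\neq j$) make the entries of every row and column of Table~\ref{square2} mutually commuting, so the square is well defined; moreover $\hat U_1\hat U_2=\tfrac12[\hat U_1,\hat U_2]=\pm i\,\hat U_3^\dagger$, so $V=\pm i\,\mathbb{1}$ and $W=-V$, whence $R_3=\mathrm{Re}(V\otimes W)=\mathbb{1}\otimes\mathbb{1}$, $C_3=\mathrm{Re}(V\otimes V)=-\mathbb{1}\otimes\mathbb{1}$ and $\mathrm{Re}(\hat X)=6\,\mathbb{1}\otimes\mathbb{1}$, i.e. every state saturates the bound. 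The only two points needing any care are the consistency of the $\pm$ sign across the three commutators and the compatibility of the third row and column, and both are mechanical once $V$ and $W$ have been identified.
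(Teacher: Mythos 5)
Your proof is correct and follows essentially the same route as the paper's: both reduce state-independent maximality to the operator identities $\hat U_1\hat U_2\hat U_3=\pm i\,\mathbb{1}$ and $\hat U_2\hat U_1\hat U_3=\mp i\,\mathbb{1}$ and then read off the stated (anti)commutation relations. Your version is in fact tighter, since you explicitly justify the step the paper only asserts --- namely that saturating the bound on every state forces the third row and column products to equal $\pm\mathbb{1}\otimes\mathbb{1}$ (via the rigidity $\mathrm{Re}(A)=\pm\mathbb{1}\Rightarrow A=\pm\mathbb{1}$ for unitary $A$) and that the tensor factors must then be scalar multiples of the identity.
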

\begin{proof}
From the PMS in Table~\ref{square2}, we can see that, in order to maximally violate the non-contextual bound, the product of the three operators in each row and column must be $\mathbb{1}$ except in the last column where it must be $-\mathbb{1}$. Also, unitaries in the same row or column must be compatible, leading to the constraints on the commutator $[\hat U_1, \hat U_3] = 0$ or on the anti-commutator $\{\hat U_1, \hat U_3\} = 0$ and the same for $\hat{U}_2$ and $\hat{U}_3$. These conditions cannot be verified at the same time, and the only possibility to obtain a state independent maximal violation of the non-contextual bound is to enforce $\{\hat U_1, \hat U_3\} = 0$ and $\{\hat U_2, \hat U_3\} = 0$. All the above ingredients combined lead to the following conditions for maximal violation of non-contextuality inequalities based on the PMS: $\hat U_1\hat U_2\hat U_3 = \pm i\mathbb{1}$ and $\hat U_2 \hat U_1\hat U_3 = \mp i\mathbb{1}$, which are equivalent to:
\begin{align}
 \pm i \hat U_2^{\dagger}\hat U_1^{\dagger}&= \hat U_3 , \label{pauliRelations1}\\
 \{\hat U_1, \hat U_2\} &= 0.  \label{pauliRelations2}
\end{align}
From the conditions~(\ref{pauliRelations1}) and (\ref{pauliRelations2}) we see that for state independent maximal violation of the PM inequality the operators $U_1$ and $U_2$ must be anti-commuting and that they completely determine the operator $U_3$ that  completes the set. Thus, if the unitary operators in the PMS fulfill the commutation relations~(\ref{commute2a}) and (\ref{commute2b}) the expectation~(\ref{realpartcontextuality}) maximally violates the non-contextuality inequality with $\braket{\mathrm{Re}(X)}=6 $, for all states $\hat\rho$.
\end{proof}
The conditions~(\ref{commute2a}) and (\ref{commute2b}) are general, and to our knowledge, have not been established so far. Previous results showing the possibility of violation of the non-contextuality inequalities are particular cases obeying these conditions. Examples are state independent contextuality using two-level systems~\cite{PhysRevLett.101.210401} and displacement operators~\cite{Asadian:2015aa}. 

We now make a step further beyond the relations~(\ref{commute2a}) and (\ref{commute2b}), and answer to the following question: given a unitary operator $\hat U_1$, what are the necessary and sufficient conditions for finding two other operators $\hat U_2$ and $\hat U_3$ such that~(\ref{commute2a}) and (\ref{commute2b}) are satisfied and thus lead to a maximal violation of noncontextuality inequalities derived from the Peres-Mermin square? The answer to this question is addressed by the following result:
\begin{theorem}[Anti-commutation of unitary operators]\label{theorem:eigenspace}
A unitary operator $\hat U_1$, acting on a Hilbert space $\mathcal H$, admits an anti-commuting partner if and only if for each eigenvalue $\lambda$ of $\hat U_1$, we find a corresponding eigenvalue $-\lambda$ whose eigenspace has the same dimension $K$ as the one of $\lambda$.
\end{theorem}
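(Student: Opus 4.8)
The plan is to reduce everything to the eigenspace decomposition of the unitary $\hat U_1$. Write $\mathcal H=\bigoplus_\lambda E_\lambda$ with $E_\lambda=\ker(\hat U_1-\lambda)$, the sum running over the point spectrum of $\hat U_1$; since $\hat U_1$ is unitary, $|\lambda|=1$ for every eigenvalue, so $\lambda\neq-\lambda$ always. The single observation that drives both implications is: a unitary $\hat U_2$ satisfies $\{\hat U_1,\hat U_2\}=0$ if and only if $\hat U_2$ maps each $E_\lambda$ onto $E_{-\lambda}$. The ``only if'' part is immediate, since for $v\in E_\lambda$ one has $\hat U_1\hat U_2 v=-\hat U_2\hat U_1 v=-\lambda\,\hat U_2 v$; the ``if'' part follows by running the same computation backwards on each block and using that the $E_\lambda$ span $\mathcal H$. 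So the first step is to record this intertwining lemma.

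For necessity I would argue as follows. Given an anti-commuting unitary partner $\hat U_2$, the observation shows that $\hat U_2$ restricts to an isometric injection $E_\lambda\hookrightarrow E_{-\lambda}$; in particular $-\lambda$ is an eigenvalue and $\dim E_{-\lambda}\ge\dim E_\lambda$. Swapping the roles of $\lambda$ and $-\lambda$ gives the reverse inequality, hence $\dim E_\lambda=\dim E_{-\lambda}$, which is exactly the stated spectral condition (with the common value $K$ allowed to depend on the pair $\{\lambda,-\lambda\}$).

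For sufficiency I would build $\hat U_2$ explicitly. Since $\lambda\neq-\lambda$, the eigenvalues fall into disjoint unordered pairs $\{\lambda,-\lambda\}$, and the hypothesis says the two eigenspaces in each pair have the same Hilbert-space dimension. On each pair I would pick any unitary $W_\lambda\colon E_\lambda\to E_{-\lambda}$ and let $\hat U_2$ act as $W_\lambda$ on $E_\lambda$ and as $W_\lambda^{\dagger}$ on $E_{-\lambda}$; this is unitary on $E_\lambda\oplus E_{-\lambda}$, and the direct sum over all pairs is a unitary $\hat U_2$ on $\mathcal H$ that sends every $E_\mu$ to $E_{-\mu}$, hence anti-commutes with $\hat U_1$ by the observation. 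Feeding this into the Lemma (with, say, $\hat U_3=i\hat U_2^{\dagger}\hat U_1^{\dagger}$) then produces a complete Peres-Mermin triple.

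I expect the only genuine obstacle to be the case in which $\hat U_1$ fails to have pure point spectrum, which is the relevant situation for continuous-variable realizations. There ``eigenvalue'' and ``eigenspace'' must be replaced by the projection-valued spectral measure of $\hat U_1$ on the unit circle: necessity becomes the statement that this measure is unitarily equivalent to its image under $z\mapsto-z$, and sufficiency requires gluing a measurable field of unitaries between spectral fibres of equal von Neumann dimension rather than a plain direct sum. For all discrete-variable scenarios, and more generally whenever the eigenspaces span $\mathcal H$, the elementary argument above goes through unchanged, so I would state the theorem in the point-spectrum form and add a remark on this extension.
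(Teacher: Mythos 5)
Your proof is correct and follows essentially the same route as the paper's: necessity by showing that an anti-commuting unitary carries an orthonormal basis of $E_\lambda$ isometrically into $E_{-\lambda}$ (giving $\dim E_{-\lambda}\ge\dim E_\lambda$, then swapping roles), and sufficiency by explicitly constructing a unitary that exchanges the paired eigenspaces. Your extra care about the non-pure-point-spectrum case is a welcome refinement, but the paper sidesteps it by explicitly restricting its proof to the finite-dimensional setting.
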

\begin{proof}
To prove the above statement we assume first that $\hat{U}_1$ fulfills the above condition on the spectrum and prove that it admits an anti-commuting partner. We restrict ourselves here to a proof in the finite dimensional case. Let's define the set of eigenvalues of $\hat{U}_1$ as $\{\lambda_1, \ldots, \lambda_k, -\lambda_1,\ldots, -\lambda_k\}$, and the set of eigenvectors associated  to each of the eigenvalues $\pm \lambda_i$ as $\{\ket{e_{i,j}^{\pm}}\}$, with possible degeneracy $j \in \{1,\ldots, K_i\}$. Since $\hat{U}_1$ is a unitary operator, we know that the set of eigenvectors $\{\ket{e_{i,j}^{\pm}}\}$ represents an orthonormal basis of the Hilbert space. Further on, we define an operator $\hat{U}_2$ through: $\hat{U}_2\ket{e_{i,j}^{\pm}} =\lambda_i '\ket{e_{i,j}^{\mp}}$, where $\lambda_i'$ are arbitrary complex numbers with absolute value $1$, which maps an orthonormal basis to another orthonormal basis thus providing a unitary operator. A simple calculation yields:
\begin{align}
(\hat{U}_1\hat{U}_2 + \hat{U}_2\hat{U}_1)\ket{e_{i,j}^{\pm}} & =\lambda_i' \hat{U}_1\ket{e_{i,j}^{\mp}} \pm  \lambda_i \hat{U}_2\ket{e_{i,j}^{\pm}} \nonumber \\
& = \mp \lambda_i \lambda_i' \ket{e_{i,j}^{\mp}} \pm \lambda_i \lambda_i'  \ket{e_{i,j}^{\mp}} = 0. \label{eq:proofU2}
\end{align}
showing that $\hat{U}_1$ and $\hat{U}_2$ are anti-commuting.

To prove the converse statement let's assume that we have two unitary operators $\hat{U}_1$ and $\hat{U}_2$ satisfying $\{\hat{U}_1, \hat{U}_2\} = 0$. We denote by $\lambda$ an eigenvalue of $\hat{U}_1$ with the corresponding eigenvectors $\ket{\{e_{i}\}}$, where $i = 1\ldots K$. Using the anti-commutation relation we can prove that $\hat{U}_2\ket{e_{i}}$ is an eigenvector of $\hat{U}_1$ with eigenvalue $-\lambda$:
\begin{equation}
(\hat{U}_1\hat{U}_2 + \hat{U}_2\hat{U}_1) \ket{e_{i}} = \hat{U}_1\hat{U}_2 \ket{e_{i}} + \hat{U}_2\lambda \ket{e_{i}}.
\end{equation}
Hence:
\begin{equation}\label{eq:eigenvalue-1}
\{U_1, U_2\} = 0 \Rightarrow \hat{U}_1\hat{U}_2\ket{e_{i}} = -\lambda \hat{U}_2 \ket{e_{i}}.
\end{equation}
Since $\{\ket{e_{i}}\}$ is an orthonormal set and $\hat{U}_2$ is a unitary operator, $\{\hat{U}_2\ket{e_{i}}\}$ is also an orthonormal set, which proves that $-\lambda$ is an eigenvalue of $\hat{U}_1$ of dimension larger or equal than $K$. The same reasoning applied to the set of eigenvectors of $\hat{U}_1$ with eigenvalue $-\lambda$ to show that the dimension of the eigenspace associated to $\lambda$ is higher or equal than the dimension of the eigenspace associated to $-\lambda$ and thus equal.
\end{proof}
Note that, the above theorem must hold for all unitary operators in the PMS,$\hat{U}_1$, $\hat{U}_2$ and $\hat{U}_3$. As a consequence of the previous lemma and theorem, we find the following characterization of the operators contained in the PMS:
\begin{corollary}[Structure of anti-commuting operators]
Unitary operators $\hat{U}_1$, $\hat{U}_2$ and $\hat{U}_3$, which lead to a state-independent maximal violation of the Peres-Mermin inequality, can be expressed in some basis as:
\begin{align}
\hat{U}_1 & = \bigoplus_{i=1}^N \lambda_i \hat \sigma_z^{(i)} \label{eq:DirectSumU1} \\
\hat U_2 & = \bigoplus_{i=1}^N \lambda_i' \hat \sigma_x^{(i)},  \label{eq:DirectSumU2} \\
\hat U_3 & =\pm \bigoplus_{i=1}^N (\lambda_i \lambda_i')^* \hat \sigma_y^{(i)}, \label{eq:DirectSumU3}
\end{align}
where $\pm \lambda_i$ are the eigenvalues of $\hat U_1$, $\hat \sigma_z^{(i)}=\bigoplus_{j=1}^{K_i}\hat \sigma_z$ is a direct sum of Pauli operators acting on the eigenspace associated to the eigenvalue $\pm \lambda_i$ with degeneracy $K_i$, and $N$ is an arbitrary, possibly infinite, integer value that is smaller than the Hilbert space dimension. $\hat \sigma^{(i)}_x=\bigoplus_{j=1}^{K_i}\hat \sigma_x$ and $\hat \sigma^{(i)}_y=\bigoplus_{j=1}^{K_i}\hat \sigma_y$ are defined similarly.
\end{corollary}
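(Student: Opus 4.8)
\emph{Strategy.} The plan is to build the asserted basis in two stages: first invoke Theorem~\ref{theorem:eigenspace} to pair up the eigenspaces of $\hat U_1$, then use the Lemma to pin down how $\hat U_2$ sends one member of each pair onto the other; $\hat U_3$ is then fixed and only has to be computed. To begin, relations~(\ref{pauliRelations1})--(\ref{pauliRelations2}) obtained in the proof of the Lemma show that state‑independent maximal violation forces $\{\hat U_1,\hat U_2\}=0$ and $\hat U_3=\pm i\,\hat U_2^{\dagger}\hat U_1^{\dagger}$, so it is enough to put $\hat U_1$ and $\hat U_2$ into the stated form. Since $\hat U_1$ is unitary it has no zero eigenvalue, so by Theorem~\ref{theorem:eigenspace} its spectrum is $\{\pm\lambda_i\}$ with $\dim E_{\lambda_i}=\dim E_{-\lambda_i}=:K_i$ for every pair, and $\mathcal H=\bigoplus_i(E_{\lambda_i}\oplus E_{-\lambda_i})$. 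Equation~(\ref{eq:eigenvalue-1}) in the proof of Theorem~\ref{theorem:eigenspace} shows that $\hat U_2$ carries $E_{\lambda_i}$ into $E_{-\lambda_i}$ (and, being unitary, isometrically onto it) and vice versa, so each $\mathcal H_i:=E_{\lambda_i}\oplus E_{-\lambda_i}$ is invariant under both $\hat U_1$ and $\hat U_2$, and it suffices to work inside a single $\mathcal H_i$.

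\emph{Matched partner basis inside $\mathcal H_i$.} I would next observe that $\hat U_2^{2}$ commutes with $\hat U_1$, hence preserves $E_{\lambda_i}$, where it restricts to a unitary; diagonalise that restriction in an orthonormal eigenbasis $\ket{a_{ij}}$, $j=1,\dots,K_i$, with $\hat U_2^{2}\ket{a_{ij}}=e^{i\theta_{ij}}\ket{a_{ij}}$. Setting $\ket{b_{ij}}:=e^{-i\theta_{ij}/2}\hat U_2\ket{a_{ij}}\in E_{-\lambda_i}$, a one‑line computation using $\hat U_2^{\dagger}\hat U_2=\mathbb{1}$ shows that the $\ket{b_{ij}}$ form an orthonormal basis of $E_{-\lambda_i}$ and that $\hat U_2\ket{a_{ij}}=e^{i\theta_{ij}/2}\ket{b_{ij}}$ and $\hat U_2\ket{b_{ij}}=e^{i\theta_{ij}/2}\ket{a_{ij}}$. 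Thus in the ordered basis $(\ket{a_{ij}},\ket{b_{ij}})$ of the two‑dimensional subspace $\mathcal H_{ij}:=\mathrm{span}\{\ket{a_{ij}},\ket{b_{ij}}\}$ one has $\hat U_1=\lambda_i\hat\sigma_z$ and $\hat U_2=\lambda_{ij}'\hat\sigma_x$ with $\lambda_{ij}':=e^{i\theta_{ij}/2}$, while $\mathcal H=\bigoplus_{i,j}\mathcal H_{ij}$.

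\emph{Assembly and computation of $\hat U_3$.} Collecting those $\mathcal H_{ij}$ that share a common pair $(\lambda_i,\lambda_{ij}')$ into a single summand yields (\ref{eq:DirectSumU1})--(\ref{eq:DirectSumU2}) with $\hat\sigma_{z}^{(i)}=\bigoplus_{j=1}^{K_i}\hat\sigma_z$ and likewise for $\hat\sigma_x^{(i)}$; here the $\lambda_i$ need not be distinct, and $N$ is the number of such classes, which is at most $\dim\mathcal H$. On each block $\hat U_3=\pm i\,\hat U_2^{\dagger}\hat U_1^{\dagger}=\pm i\,(\lambda_i\lambda_{ij}')^{*}\hat\sigma_x\hat\sigma_z=\pm(\lambda_i\lambda_{ij}')^{*}\hat\sigma_y$, since $\hat\sigma_x\hat\sigma_z=-i\hat\sigma_y$, which is (\ref{eq:DirectSumU3}); one also reads off directly that $\hat U_2$ and $\hat U_3$ have spectra symmetric about the origin, consistently with Theorem~\ref{theorem:eigenspace} applied to all three operators.

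\emph{Main obstacle.} The only genuinely non‑routine point is the middle step: recognising that the unitary $\hat U_2^{2}|_{E_{\lambda_i}}$ is the right object to diagonalise, so that its eigenvectors together with their half‑angle–rephased $\hat U_2$‑images provide a basis simultaneously reducing $\hat U_1$ and $\hat U_2$ to direct sums of $\hat\sigma_z$ and $\hat\sigma_x$; everything else is bookkeeping. As in the proof of Theorem~\ref{theorem:eigenspace} I would carry out the argument assuming $\hat U_1$ has pure point spectrum (in particular in finite dimension), the continuous‑spectrum case being formally analogous with direct integrals replacing direct sums.
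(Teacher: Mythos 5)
Your proposal is correct and follows the same overall architecture as the paper's proof: write $\hat U_1$ in the block form guaranteed by the Theorem, show that anti-commutation forces $\hat U_2$ to be block off-diagonal with respect to the eigenspace pairs $E_{\pm\lambda_i}$, and then obtain $\hat U_3$ from $\hat U_3=\pm i\,\hat U_2^{\dagger}\hat U_1^{\dagger}$ via $\hat\sigma_x\hat\sigma_z=-i\hat\sigma_y$. Where you genuinely add something is the middle step. The paper simply asserts that the only nonzero matrix elements of $\hat U_2$ are $\braket{e^{\pm}_{i,j}|\hat U_2|e^{\mp}_{i,j'}}=\lambda_i'$, but this is not automatic: a priori the restriction of $\hat U_2$ mapping $E_{\lambda_i}$ onto $E_{-\lambda_i}$ is an arbitrary $K_i\times K_i$ unitary block, and one must choose the bases of the two eigenspaces \emph{jointly} to reduce it to a phase times a pairing of basis vectors. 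Your device of diagonalizing $\hat U_2^{2}\big|_{E_{\lambda_i}}$ (legitimate since $\hat U_1\hat U_2^{2}=\hat U_2^{2}\hat U_1$) and defining the half-angle--rephased partners $\ket{b_{ij}}=e^{-i\theta_{ij}/2}\hat U_2\ket{a_{ij}}$ supplies exactly this missing construction. You also correctly note that the resulting phases $\lambda_{ij}'$ may differ between the two-dimensional blocks inside a single degenerate eigenspace, so the direct sum in the corollary must be read with possibly repeated $\lambda_i$ after regrouping; the paper's statement and proof silently assume a single $\lambda_i'$ per eigenvalue pair. In short: same route, but your version closes a real gap in the published argument.
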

\begin{proof}
An operator that fulfills the above theorem can be expressed in some basis as a direct sum:
\begin{equation}
\hat{U}_1 = \bigoplus_{i=1}^N \lambda_i \hat \sigma_z^{(i)}.
\end{equation}
where $\pm \lambda_i$ are the eigenvalues of $\hat U_1$. Let us denote by $\ket{e^{\pm}_{i, j}}$ the eigenvectors associated to $\pm \lambda_i$. As shown before, $\hat{U}_2 \ket{e^{\pm}_{i, j}}$ is an eigenvector of $\hat{U}_1$ with eigenvalue $\mp\lambda_i$ and so the only non zero elements of $\hat{U}_2$ are $\braket{e^{\pm}_{i, j} | \hat U_2 | e^{\mp}_{i, j'}}=\lambda_i^\prime$. From this it follows directly that $\hat U_2$ can be expressed in the following form:
\begin{align}
\hat U_2 = \bigoplus_{i=1}^N \lambda_i' \sigma_x^{(i)},
\end{align}
where $\hat \sigma^{(i)}_x=\bigoplus_{j=1}^{K_i}\hat \sigma_x$ is a direct sum of Pauli operators defined on the same two dimensional space as $\hat{\sigma}_z$. Finally, we can simply use Eq.~(\ref{pauliRelations2}) to calculate 
\begin{equation}
\hat U_3 = \pm \bigoplus_{i=1}^N (\lambda_i \lambda_i')^* \hat \sigma_y^{(i)}.
\end{equation}
\end{proof}
Note that, consequently, the diagonalization of a set of unitary operators, $\hat U_1$, $\hat U_2$ and $\hat{U}_3$, which leads to a state-independent violation of the Peres-Mermin inequality, will always yield the same binary form, as shown in Eq.~(\ref{eq:DirectSumU1}). This shows that maximal state-independent contextuality in the PMS is a very peculiar property related to spectrum of operators whose spectral decomposition, continuous or discrete, can be written in terms of finite or infinite direct sums of Pauli matrices weighted by complex numbers of modulus one.
\section{State independent violation of contextuality\label{sec:Examples}}
We will now study some examples of operators satisfying the presented conditions and show how they relate to the known Peres-Mermin scenario. In this respect, we will first focus on the finite dimensional case and show to demonstrate state-independent contextuality in terms of spin systems. Subsequently, we turn to the case of infinite dimensional Hilbert spaces for which we discuss two prominent examples of unitary operators that allow to rule out non-contextuality.

\subsection{Finite dimensional case}
The decompositions~(\ref{eq:DirectSumU1}) and (\ref{eq:DirectSumU2}) reveal the binary structure of the spectrum of the unitary operators $\hat U_i$, with $i=1,2,3$, which is at the heart of a maximal violation of the Peres-Mermin non-contextuality inequality for finite $N$. Thus, state independent maximal violation of contextuality in a Peres-Mermin scenario is only possible in a Hilbert space of even dimension and formed by two parties which are themselves also of even dimension. In~\cite{Asadian:2015aa}, the authors reached a similar conclusion for the case of discrete displacements in phase space. Thanks to the generality of the conditions obtained here, we can analyze in more detail a scenario containing measurements of finite discrete dimensional quantum systems, so-called qudits.

To begin let's consider the simplest case of qubit measurements, corresponding to $N=1$ in Eq.~(\ref{eq:DirectSumU1}) and (\ref{eq:DirectSumU2}), for which we recover the Peres-Mermin scenario discussed in Sec.~\ref{sec:PMS} with the Peres-Mermin squares depicted in Table~\ref{square1}. When moving to higher dimensional systems, for instance, a pair of spin $S$ particles, contextuality can be demonstrated using the following rotation operators: 
\begin{align}
\hat R_1 = e^{i\hat S_xt_1}, \  \hat R_2 = e^{i\hat S_yt_2}, \  \hat R_3 = e^{i\hat S_zt_3},
\label{eq:RotSpinS}
\end{align}
where $\hat S_x$, $\hat S_y$ and $\hat S_z$ are the three vector components of the spin $S$ operator $\hat{\textbf S}$, generating the group $\mathrm{SU}(2)$ of all unitary rotations in a $d=2S+1$ dimensional Hilbert space. In order to build a Peres-Mermin square, one must choose $t_1$, $t_2$ and $t_3$ such that $R_1$, $R_2$ and $R_3$ verify (\ref{eq:DirectSumU1}). The matrix elements of the $z$-component of $\hat{\textbf S}$ read $(S_z)_{ab} = (S+1-b)\delta_{a,b}$, and the eigenvalues of $R_1$ are $\exp(i(S+1-b)t_1)$, for $b = 1, \ldots, d-1$. Hence, conditions~(\ref{eq:DirectSumU1}) and (\ref{eq:DirectSumU2}) are only satisfied if $t_3 = \pi$ and, since $S_x$ and $S_y$ are unitarily equivalent to $S_z$,  if $t_1 =t_2= \pi$. In this case, $R_1$, $R_2$ and $R_3$ lead to a maximal violation of the Peres-Mermin inequality in terms of rotations of half-integer spins, generalizing the qubit case presented in Table~\ref{square1}. 


\subsection{Infinite dimensional case} 
In this Section we study observables which are defined in infinite dimensional Hilbert spaces. We start by the famous photon-number parity operator and later on turn to the case of modular variable measurements. The latter provides an example of a contextuality test involving measurements of observables with continuous outcomes.
\subsubsection{Measurements of the photon-number parity}
 For instance, if one considers the Hilbert space of a single mode of the electromagnetic field 
spanned by the single mode Fock basis $\{\ket n|n=0,1,\ldots,\infty\}$, we can define the photon number parity operator as $\hat P=(-1)^{\hat n}$, where $\hat n$ is the photon number operator fulfilling $\hat n\ket n=n\ket n$. The parity operator has two eigenvalues $\pm 1$ which are both infinitely degenerate and thus can be expressed as in Eq.~(\ref{eq:DirectSumU1}) with $N=1$, $\lambda_1=1$ and $K_1=\infty$. To see this, we write it in the Fock basis \mbox{$\hat P=\sum_{n=0}^\infty \ket{2n}\bra{2n}-\ket{2n+1}\bra{2n+1}$} which is equivalent to $\bigoplus_{j=1}^\infty \hat\sigma_z$ and thus to Eq.~(\ref{eq:DirectSumU1}). According to Eqs.~(\ref{eq:DirectSumU2}) we can define two anti-commuting partners of the parity operator $\hat P=\hat P_z$, which read:
\begin{align}
\hat P_x=\bigoplus_{j=1}^\infty \hat\sigma_x,  \\
\hat P_y=\bigoplus_{j=1}^\infty \hat\sigma_y.
\label{eq:ParityXY}
\end{align}
These kind of parity-pseudospin operators were also used to show that the EPR state can lead to a maximal violation of nonlocality in terms of the CHSH inequality \cite{PhysRevLett.88.040406}. Since they are hermitian and form real Pauli algebra one can consider the present case as an application of the ordinary PMS for qubits (see Table~\ref{square1}) to Hilbert spaces of infinite dimensions. In the following, we will discuss an example where this is not the case since the considered unitary operators do not form a real Pauli algebra.


\subsubsection{Measurements of modular variables}
Our results can also be used to demonstrate state independent contextuality for measurements of observables with continuous spectrum. In particular, we want to formulate a contextuality test that involves measurements of modular variables, as used previously for the demonstration of Bell nonlocality and state-independent contextuality \cite{PhysRevA.82.022114, Asadian:2015aa, PhysRevA.91.012106, PhysRevA.92.062107, PhysRevA.64.062108}. A suitable way of doing so is by using the eigenbasis of the modular position and momentum operators and the formalism developed in~\cite{arXiv:1512.02957} which we will briefly recall here.
%

Every observable in quantum mechanics can be decomposed into a sum of a modular and an integer operator. In the case of a pair of canonically conjugate observables, such as the position and momentum operators, $\hat x$ and $\hat p$, this might be done as follows~\cite{Aharonov}:
\begin{align}
 \hat{x} &= \hat N {\ell}+ \hat{\bar x}, \\
  \hat p &= \hat M \frac{2\pi}{\ell}+ \hat{\bar p},
\label{eq:DefModVar}
\end{align}
where $\hbar=1$, $\hat N$ ($\hat M$) has integer eigenvalues, and $\hat{\bar x}=(\hat x+\ell/4)~\text{mod}[\ell]-\ell/4$ ($\hat{\bar p}=(\hat p+\pi/\ell)~\text{mod} [2\pi/\ell]-\pi/\ell$) is the~{modular position} ({momentum}) operator with eigenvalues in the interval $[-\ell/4,3\ell/4 [$ ($[-\pi/\ell,\pi/\ell[$). Note that the intervals defining the spectrum of the modular operators, $\hat{\bar x}$ and $\hat{\bar p}$, was chosen such that the product of their lengths is equal to $\pi$. This condition assures that the modular parts of $\hat x$ and $\hat p$ commute, $[\hat{\bar x}, \hat{\bar p}]=0$, as shown in Ref.~\cite{Aharonov}. Having in hand a pair of commuting observables we can proceed and define a common set of eigenstates $\ket{\hat{\bar x},\hat{\bar p}}$, which is parametrized by the eigenvalues of the modular operators $\hat{\bar x}$ and $\hat{\bar p}$, and referred to as~{modular basis}. The modular basis $\{\ket{\hat{\bar x},\hat{\bar p}}|\bar x\in [-\ell/4,3\ell/4 [ ,\bar p\in[-\pi/\ell,\pi/\ell[\}$ is complete and thus enables us to represent every state uniquely as:
\begin{equation}
\ket{\Psi}=\int_{-\ell/4}^{3\ell/4}\int_{-\pi/\ell}^{\pi/\ell} d\bar xd\bar p\ \Psi({\bar x},{\bar p})\ \ket{\bar x,\bar p},
\label{eq:StateModVarRep}
\end{equation}
where $\Psi(\bar x,\bar p)$ is a normalized wave function defined on the bounded domain $[-\ell/4,3\ell/4 [\times[-\pi/\ell,\pi/\ell[$. The precise mathematical expressions of the modular eigenstates as superpositions of position or momentum eigenstates, and consequently of the modular wave function $\Psi(\bar x,\bar p)$,  can be found in Ref.~\cite{arXiv:1512.02957}.

Further on, we want to take advantage of the above modular variables formalism to find other observables that allow to demonstrate state-independent contextuality. Note that the modular representation (see Eq.~(\ref{eq:StateModVarRep})) is particularly handy if one deals with states and observables that obey certain periodicities. Let us demonstrate this at the example of the phase-space displacements operator
\begin{align}
\hat D(\nu,\mu)= e^{i \mu \hat x - i\nu \hat p}=e^{-i \mu\nu/2} e^{i \mu \hat x} e^{- i\nu \hat p},
\label{eq:PhaseSpaceDisp}
\end{align}
where $\nu$ and $\mu$ denote displacements in position and momentum, respectively. 
For instance, if we choose $\nu=0$ and $\mu=2\pi/\ell$, we find:
\begin{align}
e^{2\pi i \hat x/\ell}=\int_{-\ell/4}^{\ell/4} d\bar x  \int_{-\pi/\ell}^{\pi/\ell} d\bar p e^{2\pi i \bar x/\ell} \hat\sigma_z(\bar x,\bar p),
\label{eq:SigmaZ} 
\end{align}
where 
\begin{align}\label{eq:pauliModvarZ}
\hat \sigma_z(\bar{x}, \bar{p}) = \ket{\bar{x}, \bar{p}}\bra{\bar{x}, \bar{p}} - \ket{\bar{x} + \ell/2, \bar{p}}\bra{\bar{x}+\ell/2,\bar{p}}. 
\end{align}
Equivalently, by considering the displacements $(\nu=\ell/2,\mu=0)$ and $(\nu=\ell/2,\mu=2\pi/\ell)$, we obtain:
\begin{align}
e^{-i\hat p \ell/2} & =\int_{-\ell/4}^{\ell/4} d\bar x  \int_{-\pi/\ell}^{\pi/\ell} d\bar p e^{-i\bar p \ell/2} \hat\sigma_x(\bar x,\bar p), \label{eq:SigmaX} \\
e^{2\pi \hat x/\ell-i\hat p\ell/2} & = \int_{-\ell/4}^{\ell/4} d\bar x  \int_{-\pi/\ell}^{\pi/\ell} d\bar p e^{i\bar p \ell/2-2\pi i \bar x/\ell} \hat\sigma_y(\bar x,\bar p), \label{eq:SigmaY}
\end{align}
where $\hat \sigma_{x}(\bar{x}, \bar{p})$ and $ \hat \sigma_{y}(\bar{x}, \bar{p})$ are defined through:
\begin{align}
& \hat{\sigma}_x(\bar{x}, \bar{p}) = e^{-i\bar{p}\ell / 2} \ket{\bar{x}, \bar{p}}\bra{\bar{x} + \ell / 2, \bar{p}} + e^{i\bar{p}\ell/2} \ket{\bar{x}, \bar{p}}\bra{\bar{x} + \ell / 2, \bar{p}}, \\
& \hat{\sigma}_y(\bar{x}, \bar{p}) = i(e^{i\bar{p}\ell / 2} \ket{\bar{x}, \bar{p}}\bra{\bar{x} + \ell / 2, \bar{p}} - e^{-i\bar{p}\ell / 2} \ket{\bar{x}, \bar{p}}\bra{\bar{x} + \ell / 2, \bar{p}}).
\end{align}
The operators $\sigma_{x, y, z}(\bar{x}, \bar{p})$ define a Pauli algebra on each of the two dimensional subspaces parametrized by $\bar{x}$ and $\bar{p}$. Consequently, their commutation and anti-commutation relations read:
\begin{align}
[\hat{\sigma}_\alpha(\bar{x}, \bar{p}), \hat{\sigma}_\beta(\bar{x}', \bar{p}')] &= 2i \epsilon_{\alpha\beta\gamma}\hat{\sigma}_\gamma(\bar{x}, \bar{p}) \delta(\bar{x} - \bar{x}') \delta(\bar{p} - \bar{p}'),\label{eq:PauliCommutator}\\
\{\hat{\sigma}_\alpha(\bar{x}, \bar{p}), \hat{\sigma}_\beta(\bar{x}', \bar{p}')\} &= \delta_{\alpha\beta}\mathbb 1(\bar x,\bar p) \delta(\bar{x} - \bar{x}') \delta(\bar{p} - \bar{p}'),
\label{eq:PauliAntiCommutator}
\end{align}
where $\mathbb 1(\bar x,\bar p)=\ket{\bar{x}, \bar{p}}\bra{\bar{x}, \bar{p}} +\ket{\bar{x} + \ell/2, \bar{p}}\bra{\bar{x}+\ell/2,\bar{p}}$, and $\alpha,\beta,\gamma=1,2,3$, representative for $\alpha,\beta,\gamma=x,y,z$, respectively. Using Eqs.~(\ref{eq:PauliCommutator}) and (\ref{eq:PauliAntiCommutator}) it is easy to verify that displacement operators defined by Eqs.~\ref{eq:SigmaZ}, \ref{eq:SigmaX} and \ref{eq:SigmaY} satisfy the relations \eqref{commute2a} and \eqref{commute2b} and thus lead to a maximal state-independent violation of the Peres-Mermin inequality. For example, we have:
\begin{align}
\left[e^{2\pi\hat{x} / \ell}, e^{-i\hat{p}\ell / 2}\right] = & \iint_{-\ell/4}^{\ell/4} d\bar x d\bar x'  \iint_{-\pi/\ell}^{\pi/\ell} d\bar p d\bar p' e^{2\pi i \bar x/\ell}e^{-i\bar p \ell/2} \left[ \hat\sigma_z(\bar x,\bar p), \hat\sigma_x(\bar x', \bar p') \right] \nonumber \\
= & \int_{-\ell/4}^{\ell/4} d\bar x  \int_{-\pi/\ell}^{\pi/\ell} d\bar p e^{2\pi i \bar x/\ell-i\bar p \ell/2} 2i \hat{\sigma}_y(\bar{x}, \bar{p}) \nonumber \\
= &2i e^{-2\pi \hat x/\ell+i\hat p\ell/2},
\end{align}
and:
\begin{align}
\left\lbrace e^{2\pi\hat{x} / \ell}, e^{-i\hat{p}\ell / 2}\right\rbrace = & \iint_{-\ell/4}^{\ell/4} d\bar x   d\bar x' \iint_{-\pi/\ell}^{\pi/\ell} d\bar p d\bar p' e^{2\pi i \bar x/\ell}e^{-i\bar p \ell/2} \{ \hat\sigma_z(\bar x,\bar p), \hat\sigma_x(\bar x', \bar p') \} \nonumber \\
= & 0,
\end{align}
as expected by the relations~\eqref{commute2a} and \eqref{commute2b}. As discussed in Sec.~\ref{sec:PMSarbitrary}, this is a direct consequence of the binary spectral decomposition of the displacement operators (\ref{eq:SigmaZ}), (\ref{eq:SigmaX}) and (\ref{eq:SigmaY}) which have the form of Eqs.~(\ref{eq:DirectSumU1}), (\ref{eq:DirectSumU2}) and (\ref{eq:DirectSumU3}), respectively.

A similar result has been obtained in~\cite{Asadian:2015aa}, where it was shown that for a phase space displacement operator $\mathcal{D}(\alpha_1) = e^{\alpha_1\hat{a}^{\dagger}-\alpha_1^*\hat{a}}$, with $\alpha_1=(\nu_1+i\mu_1)/\sqrt 2$, one can always find two other displacement operators  $\mathcal{D}(\alpha_2)$ and $\mathcal{D}(\alpha_3)$, such that they satisfy the relations~(\ref{pauliRelations1}) and (\ref{pauliRelations2}). The condition for this to hold is that $\alpha_1$, $\alpha_2$ and $\alpha_3$ fulfill the relations $\mathrm{Im}(\alpha_i\alpha_j^*) = \pm\pi/2$ and $\alpha_1 + \alpha_2 + \alpha_3 = 0$. However, it is the modular representation which allows us to write the displacements (\ref{eq:SigmaZ}), (\ref{eq:SigmaX}) and (\ref{eq:SigmaY}), namely those displacements that form a rectangular triangle with area $\pi/2$ in phase space, as a continuous superposition of Pauli operators $\hat\sigma_\beta(\bar x,\bar p)$, with $\beta=x,y,z$ (see  Eq.~(\ref{eq:pauliModvarZ})). Hence, we find that Eqs.~(\ref{eq:SigmaZ}), (\ref{eq:SigmaX}) and (\ref{eq:SigmaY}) are equivalent to the general unitary operators $\hat U_i$, with $i=1,2,3$, defined in Eqs.~(\ref{eq:DirectSumU1}), (\ref{eq:DirectSumU2}) and (\ref{eq:DirectSumU3}), with eigenvalues:
\begin{align}
 \lambda(\bar x,\bar p)&= e^{2\pi i \bar x/\ell}, \\
 \lambda'(\bar x,\bar p)&= e^{ i \bar p \ell/2},  \\
 (\lambda(\bar x,\bar p)\lambda'(\bar x,\bar p))^*&= e^{i\bar p \ell/2-2\pi i \bar x/\ell},
%
\end{align}
for $\hat U_1$, $\hat U_2$ and $\hat U_3$, respectively. Remember that according to our remarks in Sec.~\ref{sec:PMSarbitrary} also $-\lambda(\bar x,\bar p)$, $-\lambda'(\bar x,\bar p)$ and $- (\lambda(\bar x,\bar p)\lambda'(\bar x,\bar p))^*$ are eigenvalues of the three unitary operators, respectively. Hence, we find that $\hat U_1$, $\hat U_2$ and $\hat U_3$ are completely determined by the functions $\lambda(\bar x,\bar p)$ and $\lambda'(\bar x,\bar p)$. In contrast to the case of the parity operator, here all eigenvalues are nondegenerate, \textit{i.e.} $K(\bar x,\bar p)=1$, and we can read the integrals in Eqs.~(\ref{eq:SigmaZ}), (\ref{eq:SigmaX}) and (\ref{eq:SigmaY}) equivalently as a continuous direct sum over Pauli matrices $\hat\sigma_\beta$, with $\beta=x,y,z$, weighted by the functions $\lambda(\bar x,\bar p)$, $\lambda'(\bar x,\bar p)$ and $(\lambda(\bar x,\bar p)\lambda'(\bar x,\bar p))^*$, respectively.

Finally, in order to perform a noncontextuality test we have to measure the real and imaginary parts of the displacement $e^{2\pi \hat x/\ell}$, $e^{-i\hat p\ell/2}$ and $e^{2\pi \hat x/\ell-i\hat p\ell/2}$, according to Eqs.~(\ref{expansion1}) and (\ref{expansion2}), yielding the modular variables $\cos{\left(2\pi \hat x/\ell\right)}$, $\cos{\left(-i\hat p\ell/2\right)}$, $\cos{\left(2\pi \hat x/\ell-i\hat p\ell/2\right)}$, $\sin{\left(2\pi \hat x/\ell\right)}$, $\sin{\left(-i\hat p\ell/2\right)}$ and $\sin{\left(2\pi \hat x/\ell-i\hat p\ell/2\right)}$. Each of these modular variables can be measured indirectly by coupling the considered system to an ancilla qubit and measuring the ancilla state. Possible implementations of such measurements using the transverse degrees of freedom of photons, ions or micro-mechanical oscillators have been proposed in Refs.~\cite{Asadian:2015aa, PhysRevA.91.012106, AsadianMacro}. Sequences of such interferometric measurements thus can be used to measure the correlations contained in Eq.~(\ref{realpartcontextuality}).

\section{Conclusion\label{sec:Conclusion}}
We derived general conditions for an operator to maximally violate non--contextuality inequalities in the Peres-Mermin scenario irrespectively of the dimension of the system used to test it. A consequence of our results is that it is not possible to maximally violate such inequalities for any state using  bipartite systems where one of the systems is in an odd dimensional Hilbert space. Nevertheless, we show how contextuality can be demonstrated using systems of arbitrarily high dimensional subsystems and in continuous variables. In both the discrete and continuous case we find a characterization in terms of their spectrum of observables that can be used to maximally violate the non-contextual bound in the Peres-Mermin inequality. This characterization allow us to find a natural decomposition of the observables in terms of Pauli matrices. Perspectives of our results are implementation of contextuality tests using a wide range of observables both in the discrete and continuous regime and relating the obtained conditions to the possibility of implementing quantum information protocols with continuous variables.

\ack
The authors acknowledge A. Dias Ribeiro and A. Asadian for helpful discussions and CAPES-COFECUB project Ph-855/15  and CNPq for financial support.\\


\begin{thebibliography}{99}

\bibitem{PhysRev.47.777}
A.~Einstein, B.~Podolsky, and N.~Rosen, Phys. Rev. \textbf{47}, 777 (1935).

\bibitem{Bell:1964}
J.~S.~Bell, Physics \textbf{1}, 195 (1964).

\bibitem{Kochen:1967aa}
S.~Kochen, E.~P.~Specker, Journal of Mathematics and Mechanics, \textbf{17}, 59 (1967).

\bibitem{Peres:PhysLettA:151}
A.~Peres, Physics Letters A \textbf{151}, 3 (1990).

\bibitem{PhysRevLett.65.3373}
N.~D.~Mermin, Phys. Rev. Lett. \textbf{65}, 3373 (1990).

\bibitem{PhysRevLett.101.210401}
A.~Cabello, Phys. Rev. Lett. \textbf{101}, 210401 (2008).

\bibitem{PhysRevLett.109.250402}
M.~Kleinmann, C.~Budroni, J.~A.~Larsson, O.~G\"uhne, and A.~Cabello, Phys. Rev. Lett. \textbf{109}, 250402 (2012).

\bibitem{PhysRevLett.108.030402}
S.~Yu and C.~H.~Oh, Phys. Rev. Lett. \textbf{108}, 030402 (2012).

\bibitem{Kirchmair:2009aa}
G.~Kirchmair, F.~Zahringer, R.~Gerritsma, M.~Kleinmann, O.~G\"uhne, A.~Cabello, R.~Blatt, and C.~F. Roos, Nature \textbf{460}, 494 (2009).

\bibitem{PhysRevLett.104.160501}
O.~Moussa, C.~A. Ryan, D.~G. Cory, and R.~Laflamme, Phys. Rev. Lett. \textbf{104},160501 (2010).

\bibitem{PhysRevX.3.011012}
V.~D'Ambrosio, I.~Herbauts, E.~Amselem, E.~Nagali, M.~Bourennane, F.~Sciarrino, and A.~Cabello, Phys. Rev. X \textbf{3}, 011012 (2013).

\bibitem{PhysRevLett.108.200405}
E.~Amselem, L.~E.~Danielsen, A.~J.~L\'opez-Tarrida, J.~R.~Portillo, M.~Bourennane, and A.~Cabello, Phys. Rev. Lett. \textbf{108}, 200405 (2012).

\bibitem{PhysRevLett.103.160405}
E.~Amselem, M.~R\aa{}dmark, M.~Bourennane, and A.~Cabello, Phys. Rev. Lett. \textbf{103}, 160405 (2009).

\bibitem{PhysRevA.88.022322}
R.~Raussendorf, Phys. Rev. A \textbf{88}, 022322 (2013).

\bibitem{NJPProba}
V.~Veitch, C.~Ferrie, D.~Gross, and J.~Emerson, New Journal of Physics \textbf{14}, 113011 (2012).

\bibitem{Howard:2014aa}
M.~Howard, J.~Wallman, V.~Veitch, and J.~Emerson, Nature \textbf{510}, 351 (2014).

\bibitem{PhysRevA.82.022114}
\'A.~R.~Plastino and A.~Cabello, Phys. Rev. A \textbf{82}, 022114 (2010).

\bibitem{Asadian:2015aa}
A.~Asadian, C.~Budroni, F.~E.~S.~Steinhoff, P.~Rabl, and O.~G{\"u}hne, Phys. Rev. Lett. \textbf{114}, 250403 (2015).

\bibitem{PhysRevA.67.060101}
Horodecki, Pawe\l{}., Phys. Rev. A \textbf{67}, 060101 (2003).

\bibitem{PhysRevA.91.012106}
A.~Ketterer, A.~Keller, T.~Coudreau and P.~Milman, Phys. Rev. A \textbf{91}, 012106 (2015).

\bibitem{PhysRevA.92.062107}
A.~S.~Arora and A.~Asadian, Phys. Rev. A \textbf{92}, 062107 (2015).
%

\bibitem{PhysRevLett.23.880}
J.~F. Clauser, M.~A. Horne, A. Shimony and R.~A. Holt, Phys. Rev. Lett., \textbf{23}, 880 (1969).

\bibitem{PhysRevLett.103.050401}
P.~Badzi\c{a}g, I.~Bengtsson, A.~Cabello and I.~Pitowsky, Phys. Rev. Lett. \textbf{103}, 050401 (2009).

%
%
%
%
%
%

\bibitem{PhysRevLett.88.040406}
Z.~B.~Chen, J.~W.~Pan, G.~Hou and Y.~D.~Zhang, Phys. Rev. Lett. \textbf{88}, 040406 (2002).

\bibitem{arXiv:1512.02957}
A.~Ketterer, A.~Keller, S.~P.~Walborn, T.~Coudreau and P.~Milman, Phys. Rev. A. \textbf{94}, 022325 (2016).

\bibitem{Aharonov}
Y.~Aharonov, H.~Pendleton, and A.~Petersen, Int. J. Theo. Phys. \textbf{2}, 213 (1969).

\bibitem{PhysRevA.64.062108}
S.~Massar and S.~Pironio, Phys. Rev. A. \textbf{64}, 062108 (2001).

\bibitem{AsadianMacro}
A.~Asadian, C.~Brukner and P.~Rabl, Phys. Rev. Lett. \textbf{112}, 190402 (2014).


\end{thebibliography}

\end{document}